\newtheorem{defin}{Definition}
\newtheorem{assump}{Assumption}
\newtheorem{prop}{Proposition}
\newtheorem{alg}{Algorithm}
\newtheorem{remark}{Remark}
\newtheorem{example}{Example}
\newcommand{\be}{\begin{equation}}
\newcommand{\ee}{\end{equation}}
\newcommand{\ba}{\begin{array}}
\newcommand{\ea}{\end{array}}
\newcommand{\bea}{\begin{eqnarray}}
\newcommand{\eea}{\end{eqnarray}}
\newcommand{\tran}{^{\mbox{\scriptsize T}}}  
\newcommand{\asign}{{\mbox{$\colon\hspace{-2mm}=\hspace{1mm}$}}}
\newcommand{\cf}{{\textit{cf.}}}
\newcommand{\vbar}{\raisebox{.17ex}{\rule{.04em}{1.35ex}}}
\newcommand{\vbarind}{\raisebox{.01ex}{\rule{.04em}{1.1ex}}}
\newcommand{\R}{\ifmmode {\rm I}\hspace{-.2em}{\rm R} \else ${\rm I}\hspace{-.2em}{\rm R}$ \fi}
\newcommand{\D}{\ifmmode {\rm I}\hspace{-.2em}{\rm D} \else ${\rm I}\hspace{-.2em}{\rm D}$ \fi}
\newcommand{\T}{\ifmmode {\rm I}\hspace{-.2em}{\rm T} \else ${\rm I}\hspace{-.2em}{\rm T}$ \fi}
\newcommand{\N}{\ifmmode {\rm I}\hspace{-.2em}{\rm N} \else \mbox{${\rm I}\hspace{-.2em}{\rm N}$} \fi}
\newcommand{\B}{\ifmmode {\rm I}\hspace{-.2em}{\rm B} \else \mbox{${\rm I}\hspace{-.2em}{\rm B}$} \fi}
\newcommand{\Hil}{\ifmmode {\rm I}\hspace{-.2em}{\rm H} \else \mbox{${\rm I}\hspace{-.2em}{\rm H}$} \fi}
\newcommand{\C}{\ifmmode \hspace{.2em}\vbar\hspace{-.31em}{\rm C} \else \mbox{$\hspace{.2em}\vbar\hspace{-.31em}{\rm C}$} \fi}
\newcommand{\Cind}{\ifmmode \hspace{.2em}\vbarind\hspace{-.25em}{\rm C} \else \mbox{$\hspace{.2em}\vbarind\hspace{-.25em}{\rm C}$} \fi}
\newcommand{\Q}{\ifmmode \hspace{.2em}\vbar\hspace{-.31em}{\rm Q} \else \mbox{$\hspace{.2em}\vbar\hspace{-.31em}{\rm Q}$} \fi}
\newcommand{\Z}{\ifmmode {\rm Z}\hspace{-.28em}{\rm Z} \else ${\rm Z}\hspace{-.28em}{\rm Z}$ \fi}
\renewcommand{\vec}[1]{\bf{#1}}     
\newenvironment{Exx}
{\begin{adjustwidth}{0.022\linewidth}{0cm}
\begingroup\small
\vspace{-1.0em}
\raisebox{-.25em}{\rule{\linewidth}{0.3pt}}
\begin{example}
}
{
\end{example}
\vspace{-6mm}
\rule{\linewidth}{0.3pt}
\endgroup
\end{adjustwidth}}
\newenvironment{Algg}
{\begin{adjustwidth}{0.022\linewidth}{0cm}
\begingroup\footnotesize
\vspace{-1.0em}
\raisebox{-.25em}{\rule{\linewidth}{0.3pt}}
\begin{alg}
}
{
\end{alg}
\vspace{-6mm}
\rule{\linewidth}{0.3pt}
\endgroup
\end{adjustwidth}}
\newcommand{\add}[1]{{{\color{blue!0!black}#1}}}
\newcommand{\addnew}[1]{{{\color{blue!0!black}#1}}}
\begin{document} 
\title{\addnew{On the Privacy of Optimization Approaches}}
\author{P.~C. Weeraddana,,~\IEEEmembership{Member,~IEEE,}
\thanks{P. C. Weeraddana, G. Athanasiou, M. Jakobsson, and C. Fischione with Electrical Engineering, KTH Royal Institute of Technology, Stockholm, Sweden. J. S. Baras is with the Department of Electrical and Computer Engineering, University of Maryland, USA. }
G. Athanasiou,~\IEEEmembership{Member,~IEEE,} \\ M. Jakobsson,~\IEEEmembership{Student Member,~IEEE,}  C. Fischione,~\IEEEmembership{Member,~IEEE,} \\ and J. S. Baras,~\IEEEmembership{Fellow,~IEEE}
}


\maketitle

\setlength{\baselineskip}{21pt}

\vspace{-10mm}
\begin{abstract}
\addnew{
Ensuring privacy of sensitive data is essential in many contexts, such as healthcare data, banks, e-commerce, wireless sensor networks, and social networks. It is common that different entities coordinate or want to rely on a third party to solve a specific problem. At the same time, no entity wants to publish its problem data during the solution procedure unless there is a privacy guarantee. Unlike cryptography and differential privacy based approaches, the methods based on optimization lack a quantification of the privacy they can provide. The main contribution of this paper is to provide a mechanism to quantify the privacy of a broad class of optimization approaches. In particular, we formally define a one-to-many relation, which relates a given adversarial observed message to an uncertainty set of the problem data. This relation quantifies the potential ambiguity on problem data due to the employed optimization approaches. The privacy definitions are then formalized based on the uncertainty sets. The properties of the proposed privacy measure is analyzed. The key ideas are illustrated with examples, including localization, average consensus, among others. 



}
\end{abstract}
\vspace{4mm}
\begin{keywords}\vspace{-0mm}
Privacy, distributed optimization, ADMM, secured multiparty computation.
\end{keywords}


\section{Introduction}
Privacy is central in many application domains, where parties jointly solve optimization problems. These interactions and collaborations are desirable to gain mutual benefits. For example, independent hospitals would like to coordinate for diagnostic decision making based on their existing patient records. Normally, optimization solvers require public data sharing, which substantially hinder the cooperation due to privacy concerns (e.g., privacy for patients' records). The challenge is how to \emph{solve} problems among parties, while preserving privacy of their individual data. %
\addnew
{
\subsection{Problem Statement}\label{subsec:problem_statement}
Formally, an $n$-party design and decision making problem can be posed as follows:
\begin{equation} \label{eq:optimization_prob}
\begin{array}{ll}
\mbox{minimize} & f_0({\vec x};{\vec c}_1,\ldots,{\vec c}_N) \\
\mbox{subject to} & { g}({\vec x};{\vec c}_1,\ldots,{\vec c}_N) \preceq {\vec 0} \\
& {h}({\vec x};{\vec c}_1,\ldots,{\vec c}_N)= {\vec 0} \ .
\end{array}
\end{equation}
Here, $f_0(\cdot)$ is a scalar-valued objective function of the decision variables ${\vec x}\in\R^M$, given the problem data $\{{\vec c}_i\}_{i=1:N}$,~\footnote{\addnew{Dimensions of ${\vec c}_i$'s, can also be problem data, which we considered known, throughout to avoid cumbersome notations.}} where ${\vec c}_i$ is the \emph{private} data (e.g., matrices) associated with entity~$i$. Moreover, ${ g}(\cdot)$ and ${ h}(\cdot)$ are vector-valued nonlinear and affine functions, respectively.

We now pose the questions: \texttt{Q1}) How to quantify the privacy of the problem data $\{{\vec c}_i\}_{i=1:N}$~? \texttt{Q2}) How to \emph{solve} problem~\eqref{eq:optimization_prob}, such that the ``privacy" of $\{{\vec c}_i\}_{i=1:N}$ is preserved ? 

}

\subsection{Existing Approaches}

\addnew
{
An ideal answer to \texttt{Q1} is \emph{perfect privacy}, where no information of the original problem data $\{{\vec c}_i\}_{i=1:N}$ can be extracted by a third party. However, answering \texttt{Q2} with perfect privacy is generally impossible. There is an alternative fundamental theory for answering \texttt{Q1}, namely, cryptography~\cite{Goldreich-book-2004}, which includes a number of mechanisms to encrypt the original problem data $\{{\vec c}_i\}_{i=1:N}$. The underlying privacy definition is called \emph{semantic security}~\cite[\S~5]{Goldreich-book-2004}. Loosely speaking, the semantic privacy means that nothing can be \emph{feasibly extracted} from the encrypted data. Given \texttt{Q1} is handled by the semantic security, \emph{secured multiparty computation} (SMC)~\cite{Yao-1982} is a well established framework for answering \texttt{Q2}. In principle, SMC is applied to almost any problem, given that the required computations are written as a Boolean circuit, or as an arithmetic circuit. However, the general objective and constraints of problem~\eqref{eq:optimization_prob} can be complicated functions, and therefore usually the associated solution approach become impractical~\cite{Damgard-06,Toft-2009},\cite[\S~2.3.2]{Bednarz-2012}.
}

\addnew
{
An alternative line of research for answering \texttt{Q1} considers the \emph{$\epsilon$-differential privacy}~\cite{Dwork-McSherry-Nissim-Smith-2006}. The definition allows a non-negligible
information leakage, denoted by $\epsilon$, as opposed to the cryptographic counterpart. Roughly speaking, the standard way of accomplishing $\epsilon$-differential privacy for \texttt{Q2} is based on a random perturbation mechanism, where the underlying subroutines of the associated solution method is appropriately mutated by using random noise with specific statistical properties, \cf~\cite{Chaudhuri-Monteleoni-Sarwate-2011,Kifer-Smith-Thakurta-2012,Guptaetal-2010,Ny-Pappas-CDC-2013,Ny-Pappas-Alerton-2013,Huang-Mitra-Vaidya-2014}. As a consequence, there is a trade-off between the optimality of the solution of problem~\eqref{eq:optimization_prob} and the desired privacy level of the problem data, quantified by $\epsilon$. In addition, the applicability of differential privacy based approaches for solving problem~\eqref{eq:optimization_prob} can usually be limited. Note that specifying the noise properties, such as variances to achieve a pre-specified differential privacy is solely dependent on the \emph{sensitivity} of the related functions (e.g., objective, constraint, subgradient functions) with respect to \emph{perturbations of the problem data}. In general, quantifying such sensitivities in a \emph{closed-form} is challenging. One can certainly compute some bounds. However, unless the bounds are tight, the noise variance computed based on the bounds will be over estimated, which in turn can degrade the optimality of the solution significantly~\cite{Dwork-McSherry-Nissim-Smith-2006,Chaudhuri-Monteleoni-Sarwate-2011,Huang-Mitra-Vaidya-2014}.

}

\addnew{
As opposed to the cryptography and differential privacy based methods discussed above, there is another class of approaches, which answers~\texttt{Q2}~\cite{Mangasarian-OptLet-2011,Mangasarian-OptMethSW-2011,Mangasarian-OptLet-2012,Mangasarian-Wild-2008,Mangasarian-Wild-Fung-2008,Bednarz-2012,Dreier-Kerschbaum-2011,Bednarz-Bean-Roughan-2009,Wang-Ren-Wang-11}, yet answers to~\texttt{Q1} are rather intuitively explained. The key idea of the methods is to use algebraic manipulations to disguise the original problem into an equivalent problem so that the problem data $\{{\vec c}_i\}_{i=1:N}$ is somehow hidden. We refer to these approaches as algebraic transformation methods. Algebraic transformation methods are promising in the sense that they are efficient and usually guarantee the optimality of the solution of problem~\eqref{eq:optimization_prob}, though privacy of the methods is not quantified. The approaches are typically applied to a broader class of problems, as opposed to the cryptographic or differential privacy based approaches. Therefore it is still desirable to quantify their privacy as response to~\texttt{Q1}.
}


\subsection{Our Contributions}

\addnew{The main contribution of this paper is to quantify the privacy properties of a broader class of optimization approaches, where the algebraic transformation methods  are particular cases. More specifically, the proposed privacy definition applies to 1) methods via standard objective/constraint transformations, 2) methods via standard variable transformations, 3) classical decomposition methods, e.g., primal and dual decomposition, and 4) state-of-the-art alternating direction method of multipliers method (ADMM).}

\addnew{To do this, the underlying disguise of problem data $\{{\vec c}_i\}_{i=1:N}$ due to those optimization approaches is formally represented as a one-to-many binary relation between the observed message of an adversary and an uncertainty set, in which the sensitive problem data $\{{\vec c}_i\}_{i=1:N}$ itself is a member. The privacy of the associated method can then be quantified by the properties of the uncertainty set. The properties of the proposed privacy measure is analyzed. Given an optimization approach as response to \texttt{Q2}, our definition answers \texttt{Q1}. Implications of the proposed privacy measure is discussed. Our privacy definition allows comparing the privacy of different optimization approaches applied to a problem. It quantifies the \emph{best} and the \emph{worst} privacy for an specified problem data. Several examples are given to illustrate the key ideas.}

In \S~\ref{sec:definitions} we present our new proposed definitions and other basic ones, useful for describing the privacy properties of optimization approaches. Application of the proposed privacy definitions for optimization approaches are illustrated in \S~\ref{sec:Opt_Approaches_Privacy}. Conclusions are given in~\S~\ref{sec:conclusions}.

\addnew{\emph{{Notations:}} Boldface lower case and upper case letters represent vectors and matrices, respectively, and calligraphy letters represent sets. The Euclidean-$n$ space and the positive integers are denoted by $\R^n$ and $\N$, respectively. The superscript $(\cdot)^{\tran}$ is the transpose. We use the notation $({\vec a}, {\vec b})$ to denote $[{\vec a}\tran \ {\vec b}\tran]\tran$. The identity matrix, all zero vector, and all one vector, are denoted by ${\vec I}$, ${\vec 0}$, and ${\vec 1}$, respectively. The $\ell_2$-norm of ${\vec x}$ is denote by $||{\vec x}||_2$. The power set of an arbitrary space $\Omega$ is denoted by $\mathcal{P}(\Omega)$. A ball with radius $r$ and center ${\vec b}$ is denoted by $\mathcal{B}(r,{\vec b})$. Finally, we denote by $\succeq$, the component-wise inequality. }

\section{Privacy Quantifications}\label{sec:definitions}

\addnew{Original definitions together with other basic definitions and assumptions for quantifying the privacy properties of the optimization approaches are given. The essential implications of the definitions are also discussed.}

\addnew{
\subsection{Problem Data and the Adversary}\label{subsec:definitions_data_advsry}
}

\addnew{
\begin{defin}[Inputs, input tuples, and messages]\label{def:input}
Consider the problem~(\ref{eq:optimization_prob}). We call the \emph{set of problem data} $\{{\vec c}_i\}_{i=1:N}$, denoted $\mathcal{C}$, the \emph{inputs} of problem~(\ref{eq:optimization_prob}). Suppose ${\vec c} = {\vec f}(({\vec c}_i)_{i=1:N})$, where ${\vec f}$ is a possibly vector valued function with non-constant components. Moreover, suppose that ${\vec c}$ lies in a metric space $\mathcal{X}_{\vec c}$. We call ${\vec c}$, an \emph{input tuple} of $\mathcal{C}$. Finally, we call a finite length data structure ${S}\big(({\vec c}_i)_{i=1:N}\big)=\Big(S_1\big(({\vec c}_i)_{i=1:N}\big),\ldots,S_M\big(({\vec c}_i)_{i=1:N}\big)\Big)$, a \emph{message} of $\mathcal{C}$, where $S_j\big(({\vec c}_i)_{i=1:N}\big)$ represents any vector/matrix valued function. \hfill $\blacktriangleleft$
\end{defin}
}
\addnew{Note that elements of ${\vec c}$ represents the central components what the problem data owner cares about. The key points of the definition is illustrated in Example~\ref{ex:input_output1}. Throughout the paper we consider the following assumption on the input of a problem:}
\addnew{
\begin{assump}\label{assump:input}
The inputs of any optimization problem of the form~(\ref{eq:optimization_prob}) is deterministic, and therefore there is no associated statistical models. \hfill $\blacktriangleleft$
\end{assump}
}
The considered adversarial model and its associated knowledge is defined next. The definition of the adversary is similar to the passive eavesdroppers considered in~\cite[\S~5.1-5.3]{Goldreich-book-2004}. We consider the \emph{passive adversary} model throughout this paper.
%

\begin{defin}[\add{Passive adversary}]\label{def:adversary}
\add{In a multi-party environment, a party involved in solving a problem of the form~(\ref{eq:optimization_prob}), or even a third party, is called a passive adversary if it taps the messages of input of (\ref{eq:optimization_prob}) exchanged during different stages of the solution method, keeps a record of possibly all the messages exchanged, and tries to discover others' private~data.}  \hfill $\blacktriangleleft$
\end{defin}


\addnew{
\begin{defin}[Adversarial knowledge]\label{def:AdvsryKnowledge}
The set $\mathcal{K}$ of information that an adversary might exploit to discover the input of problem~(\ref{eq:optimization_prob}) is called the adversarial knowledge. The knowledge can contain messages $S$ of the problem input, which we call the \emph{inevitable} knowledge. The rest of the knowledge $\mathcal{K}\setminus\{S\}$ is called the \emph{auxiliary} knowledge. \hfill $\blacktriangleleft$
\end{defin}
}

\addnew{
Note that the inevitable knowledge contains messages (e.g., $S$, \cf~Definition~\ref{def:input}) that must be revealed to a third party, such as transformed variants of inputs necessary for executing the solution method. The auxiliary knowledge can encompass components, such as eavesdropped measures of input elements, knowledge of optimality properties, etc.,~\cf~Example~\ref{ex:input_output1}. The following remark summarizes a property of $\mathcal{K}$ that will be useful later.

\begin{remark}\label{remark:knowledge}
For a given problem and its solution approach, the knowledge $\mathcal{K}$ of a third party can be further improved and can never be decreased as time elapses. In other words, given time $t_0\leq t_1\leq t_2\leq \cdots$, we have $\mathcal{K}=\mathcal{K}_{t_0}\subseteq\mathcal{K}_{t_1}\subseteq\mathcal{K}_{t_2}\subseteq\cdots$, where $\mathcal{K}_{t_i}$ denotes the adversarial knowledge at time $t_i$. As a result, unless the underlying solution method is changed, it is always the case that $\mathcal{K}_{t_0}\subseteq\mathcal{K}_{t_1}\subseteq\mathcal{K}_{t_2}\subseteq\cdots$. \hfill $\blacktriangleleft$
\end{remark}

Note that \emph{different} solution methods in general can yield different sets of knowledge, even if the problem is fixed. Finally, recall that the set $\mathcal{C}$ of the inputs/outputs is deterministic, \cf~Assumption~\ref{assump:input}. Therefore, the adversarial knowledge $\mathcal{K}$ does not contain elements such as probability density functions or cumulative distribution functions to describe statistical models for ${\vec c}_i\in\mathcal{C}$.}


\addnew
{
\begin{figure}[t]
\centering
\begin{tikzpicture}[scale=0.30]
\node (C) at (-3,0) {\scriptsize{$\{S\}$}};
\node[ellipse,draw=black,fill=blue!20,minimum height=10mm,minimum width=60mm] (A) at (12,0) {\tiny{$\mathcal{U}_{\vec c}(\mathcal{K}_1)$}};
\node[ellipse,draw=black,fill=white,minimum height=3mm,minimum width=3mm] (B) at (16.7,-0.2) {\tiny{$\mathcal{U}_{\vec c}(\mathcal{K}_2)$}};
\draw (C) edge [-latex, thick, out=30,in=160]  node[above] {\tiny{$R_{\vec c}(\mathcal{K}_1)$}} (A) ;
\draw (C) edge [-latex, thick, out=-30,in=190]  node[below] {\tiny{$R_{\vec c}(\mathcal{K}_2)$}} (B) ;
\end{tikzpicture}
\captionof{figure}{\label{Fig:relation}\small{\addnew{Obfuscation of an input tuple ${\vec c}$, where $S\subseteq\mathcal{K}_1$, $S\subseteq\mathcal{K}_2$, $\mathcal{K}_1\subseteq\mathcal{K}_2$, and the ellipsoids represent uncertainty sets of ${\vec c}$.}}}
\vspace{-7mm}
\end{figure}
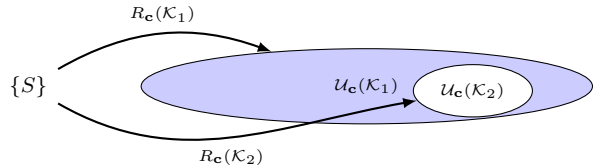
}

\addnew{
\begin{defin}[Obfuscation of the input and uncertainty sets]\label{def:Obsf_Unctnty}
Let $\mathcal{C}$, ${\vec c}$, ${S}$, and $\mathcal{K}$ denote the input, an input tuple, a message of problem~(\ref{eq:optimization_prob}), and the adversarial knowledge, respectively, \cf~Definition~\ref{def:input}. Suppose ${S}\subseteq\mathcal{K}$. We say ${\vec c}\in \mathcal{X}_{\vec c}$ is obfuscated if there exists a \emph{one-to-many} \emph{binary relation} $\mathcal{R}_{\vec c}$ from the singleton $\{S\}$ to a nonempty set $\mathcal{U}_{\vec c}\subseteq \mathcal{X}_{\vec c}$, such that $(S,{\vec c})\in \mathcal{R}_{\vec c}$. We call the (possibly unbounded) set $\mathcal{U}_{\vec c}=\{{\vec u} \ | \  (S,{\vec u})\in \mathcal{R}_{\vec c} \}$, the uncertainty set for ${\vec c}$. \hfill $\blacktriangleleft$
\end{defin}
}

\addnew
{
The \emph{one-to-many} nature of such relations is the mechanism of achieving ambiguity in ${\vec c}$, \cf~Figure~\ref{Fig:relation}. An example to clarify Definition~\ref{def:Obsf_Unctnty} is given below:
\begin{Exx}\label{ex:input_output1}
Consider a client-cloud environment, where the client has an optimization problem with problem input $\mathcal{C}=\{{\vec a},{\vec b}\}$, where ${\vec a},{\vec b}\in\R^n$ are private data that the client cares about. The cloud (potential adversary) solves the problem for the client. Suppose that the employed deterministic solution methods for the problem is such that client has to outsource the two scalars $s_1$ and $s_2$ to the cloud, where $s_1=||{\vec a}||^2_2$ and $s_2={\vec b}\tran{\vec a}$, that is the message $S=(s_1,s_2)$. Suppose, in addition, cloud knows $n$ (i.e., auxiliary knowledge) and wants to find input tuple ${\vec a}$. Then we have $\mathcal{R}_{\vec a}=\{(S,{\vec a}) \ | \   s_1=||{\vec a}||^2_2, s_2={\vec b}\tran{\vec a}, {\vec b}\in\R^n , {\vec a}\in\R^n\}$.
\end{Exx}
}

\addnew{
Trivially, having more knowledge of an input tuple ${\vec c}$ can only reduce the ambiguity in ${\vec c}$. This phenomenon is formally expressed in the following remark, \cf~Figure~\ref{Fig:relation}.
\begin{remark}\label{remark:relation}
Let $\mathcal{R}_{\vec c}(\mathcal{K}_1)$ and $\mathcal{R}_{\vec c}(\mathcal{K}_2)$ be relations defined as in Definition~\ref{def:Obsf_Unctnty} associated with a problem, where the \emph{inevitable} knowledge is identical in both cases, i.e., $\mathcal{K}_1\supseteq S$ and $\mathcal{K}_2\supseteq S$ for some message $S$. Then $\mathcal{K}_1\subseteq\mathcal{K}_2\Rightarrow\mathcal{U}_{\vec c}(\mathcal{K}_1)\supseteq \mathcal{U}_{\vec c}(\mathcal{K}_2)$. \hfill $\blacktriangleleft$
\end{remark}
}

\addnew{We note that, in many interesting cases, the considered optimization approaches in this paper can be used to realize a relation $\mathcal{R}_{\vec c}$ as defined in Definition~\ref{def:Obsf_Unctnty}. We capitalize on the properties of such relations resulted by optimization approaches, when quantifying the privacy.}

\addnew{
\subsection{Privacy Index}\label{subsec:privacyIndex}
}

\addnew{Given a relation $\mathcal{R}$ and its associated uncertainty set $\mathcal{U}$ that lies in a metric space $\mathcal{X}$ (\cf~Definition~\ref{def:Obsf_Unctnty}). We consider the following metrics to quantify the size and the spread of $\mathcal{U}$.
\begin{enumerate}
\item Diameter of $\mathcal{U}$, denoted \texttt{d}($\mathcal{U}$).
\item Counting measure of $\mathcal{U}$, denoted by $\mu(\mathcal{U})$~\cite[p.~146]{McDonald-Weiss-2013}.
\item Affine dimension of $\mathcal{U}$, denoted \texttt{a}($\mathcal{U}$)~\cite[\S~2.1.3]{Boyd-Vandenberghe-04}.
\end{enumerate}
}

\addnew{
The diameter gives a measure of the worst discrepancy between any two points in the uncertainty set. In particular, we define \texttt{d}$(\cdot)$ as follows:
\begin{defin}[Diameter/center of an uncertainty set]\label{def:adversary_attackmodel}
Given an input tuple ${\vec c}$ of the input $\mathcal{C}$ and ${\vec c}$ lies in a metric space $\mathcal{X}_{\vec c}$. Let $\mathcal{U}_{\vec c}\subseteq\mathcal{X}_{\vec c}$ denote the uncertainty set of ${\vec c}$. Then the diameter \texttt{d}~$: \mathcal{X}_{\vec c}\rightarrow\R \cup \infty$ is given~by
\be\label{eq:diameter}
\hspace{-0mm}\texttt{d}(\mathcal{U}_{\vec c}){=} \left\{\hspace{-1mm} \begin{array}{ll}
   2 \ \inf_{\vec v} \sup_{{\vec u}\in\mathcal{U}_{\vec c}} ||\vec v-u|| & \  \textrm{$\mathcal{U}_{\vec c}$ is bounded}\\
  \infty  & \ \mbox{otherwise} \ ,
   \end{array} \right.
\ee
where $||\cdot||$ is some norm on $\mathcal{X}_{\vec c}$. The argument ${\vec v}$ that achieves $\texttt{d}(\mathcal{U}_{\vec c})$ is called the \emph{center} of $\mathcal{U}_{\vec c}$ and is denoted by $\texttt{c}(\mathcal{U}_{\vec c})$. If the set $\mathcal{U}_{\vec c}$ is unbounded, we say any point in $\mathcal{X}_{\vec c}$ achieves $\texttt{d}(\mathcal{U}_{\vec c})$, i.e., $\texttt{c}(\mathcal{U}_{\vec c})$ is arbitrary.
\hfill $\blacktriangleleft$
\end{defin}

It is not difficult to see that the ball $\mathcal{B}(\texttt{d}(\mathcal{U}_{\vec c})/2,\texttt{c}(\mathcal{U}_{\vec c}))$, denoted $\mathcal{B}_{\mathcal{U}_{\vec c}}$, is the smallest volume ball, which encloses $\mathcal{U}_{\vec c}$. The metric $\texttt{d}(\mathcal{U}_{\vec c})$ for measuring the size of the uncertainty set $\mathcal{U}_{\vec c}$ can be interpreted and motivated as follows. Suppose the set $\mathcal{U}_{\vec c}$ be bounded and an adversary wants to estimate ${\vec c}$. Because ${\vec c}$ is deterministic (\cf~Assumption~\ref{assump:input}), a reasonable criterion, from the view of the adversary, is to pick ${\vec c}^\star_{\textrm{adv}}$ that minimizes the associated worst-case
error $||{\vec c}_{\textrm{adv}}-{\vec c}||$ of a candidate guess ${\vec c}_{\textrm{adv}}$. These strategies are known as \emph{worst-case robust designs}, \cf \cite[\S~6.4.2]{Boyd-Vandenberghe-04}. From \eqref{eq:diameter}, it follows that ${\vec c}^\star_{\textrm{adv}}=\texttt{c}(\mathcal{U}_{\vec c})$ and the adversary's worst discrepancy between its guess ${\vec c}^\star_{\textrm{adv}}$ and ${\vec c}$ is given by $(1/2)\texttt{d}(\mathcal{U}_{\vec c})$. On the other hand, if the set $\mathcal{U}_{\vec c}$ is unbounded, a rational adversary has no preference for one point in $\mathcal{X}_{\vec c}$ over another. Therefore, the adversary's worst discrepancy is unbounded above, irrespective of its guess ${\vec c}^\star_{\textrm{adv}}$.

The counting measure of an uncertainty set is an estimate of the number elements in it~\cite[p.~146]{McDonald-Weiss-2013}, and is defined below.
\begin{defin}[Counting measure of an uncertainty set]\label{def:Cardinality}
Given an input tuple ${\vec c}$ of the input $\mathcal{C}$ that lies in a metric space $\mathcal{X}_{\vec c}$. Let $\mathcal{U}_{\vec c}\subseteq\mathcal{X}_{\vec c}$ denote the uncertainty set of ${\vec c}$. The counting measure of $\mathcal{U}_{\vec c}$ is $\mu(\mathcal{U}_{\vec c})$, where $\mu{:}\mathcal{P}(\mathcal{X}_{\vec c}){\rightarrow} \N {\cup} \infty$ is
\be\label{eq:cardinality}
\mu(\mathcal{A})= \left\{ \begin{array}{ll}
  |\mathcal{A}| & \ \ \textrm{$\mathcal{A}$ is finite}\\
  \infty  & \ \ \mbox{otherwise} \ ,
   \end{array} \right.
\ee
with $|\cdot|$ denoting the cardinality of a \emph{finite} set. \hfill $\blacktriangleleft$
\end{defin}

The metric $\mu(\mathcal{U}_{\vec c})$ for quantifying the uncertainty set $\mathcal{U}_{\vec c}$ can be motivated as follows. If $\mathcal{U}_{\vec c}$ is finite, the value $\mu(\mathcal{U}_{\vec c})$ can be used to quantify the adversary's \emph{best} worst-case winning~\footnote{\addnew{Here the term \emph{winning} refers to the event of  guessing correctly ${\vec c}$.}} probability, denoted $p^{\textrm{win}}_{\textrm{w-c}}$, and to characterizes the probability mass function (pmf) of adversary's guess, denoted ${\vec p}^\star_{\textrm{adv}}$, to achieve $p^{\textrm{win}}_{\textrm{w-c}}$. Recall that ${\vec c}$ is deterministic,~\cf~Assumption~\ref{assump:input}. That is, the adversary has no statistical models for describing the variation in ${\vec c}$. Therefore, a natural criterion for any rational adversary is to choose a probability distribution that maximizes the resulting winning probability $\texttt{Pr}(\textrm{adversary wins})$, among all possible probability distributions for ${\vec c}$. From basic probability rules, we get $p^{\textrm{win}}_{\textrm{w-c}}= 1/\mu(\mathcal{U}_{\vec c})$ and ${\vec p}^\star_{\textrm{adv}}=(1/\mu(\mathcal{U}_{\vec c})){\vec 1}$, as probabilistic intuition requires. If $\mathcal{U}_{\vec c}$ is infinite, as our probabilistic intuition suggests, $p^{\textrm{win}}_{\textrm{w-c}}$ ought to be $0$. Thus, $\mu(\mathcal{U}_{\vec c})$ quantifies the ``practical impossibility" of an \emph{exact} guess.

%
%

The last metric used to quantify the properties of $\mathcal{U}_{\vec c}$ is its affine dimension, i.e., $\texttt{a}(\mathcal{U}_{\vec c})$. In particular, $\texttt{a}(\mathcal{U}_{\vec c})$ is the dimension of its affine hull~\cite[\S~2.1.3]{Boyd-Vandenberghe-04}. Roughly speaking, $\texttt{a}(\mathcal{U}_{\vec c})$ quantifies the number of orthogonal directions, along which an adversary has to perform its search for locating ${\vec c}$.
}

\addnew{
Let us finally give a formal definition to quantify the privacy of optimization methods considered in this paper. To do this, we use the metrics $\texttt{d}(\cdot)$, $\mu(\cdot)$, and $\texttt{a}(\cdot)$ discussed above.}
\addnew{
\begin{defin}[Privacy index]\label{def:input_privacy}
Let $\mathcal{C}$ denote the input of problem~(\ref{eq:optimization_prob}) and ${\vec c}$ denote any input tuple of $\mathcal{C}$ that lies in a metric space $\mathcal{X}_{\vec c}$. Moreover, let $\mathcal{U}_{\vec c}\subseteq\mathcal{X}_{\vec c}$ denote the uncertainty set of ${\vec c}$. Then we call the 3-component tuple $(\texttt{d}(\mathcal{U}_{\vec c}),\nu(\mathcal{U}_{\vec c}),\texttt{a}(\mathcal{U}_{\vec c}))$ the privacy index of ${\vec c}$, where $\nu(\mathcal{U}_{\vec c})=1-1/\mu(\mathcal{U}_{\vec c})$. Moreover, the corresponding tuple is denoted by $\boldsymbol{\rho}(\mathcal{U}_{\vec c})$, i.e., $\boldsymbol{\rho}(\mathcal{U}_{\vec c})=(\texttt{d}(\mathcal{U}_{\vec c}),\nu(\mathcal{U}_{\vec c}),\texttt{a}(\mathcal{U}_{\vec c}))$. \hfill $\blacktriangleleft$
\end{defin}
}
\addnew
{
\noindent Note that $\nu(\mathcal{U}_{\vec c})\in[0,1)$ is the adversary's \emph{best} worst-case loosing probability, i.e., $\nu(\mathcal{U}_{\vec c})=1-p_{\textrm{w-c}}^{\textrm{win}}$. Some basic properties of the $\boldsymbol{\rho}(\cdot)$ are proved in the following proposition.
\begin{prop}\label{prop:privacy_index}
The privacy index $\boldsymbol{\rho}(\mathcal{U})$ has the following properties:
\begin{enumerate}
\item[a)] ${\vec 0}\preceq\boldsymbol{\rho}(\mathcal{U})\preceq(\infty,1,n)$ for all $\mathcal{U}\subseteq\R^n$.
\item[b)] $\mathcal{U}$ is a singleton $\Leftrightarrow$ $\boldsymbol{\rho}(\mathcal{U})={\vec 0}$.
\item[c)] $\boldsymbol{\rho}({\vec v} + \mathcal{U})= \boldsymbol{\rho}(\mathcal{U})$, for all ${\vec v}\in\R^n$ and $\mathcal{U}\subseteq\R^n$, where ${\vec v}+\mathcal{U}=\{{\vec v}+{\vec u} \ | \ {\vec u}\in\mathcal{U}\}$.
\item[d)] $\boldsymbol{\rho}({\vec P}\mathcal{U})= \boldsymbol{\rho}(\mathcal{U})$, for all unitary matrices ${\vec P}\in\R^{n\times n}$ and $\mathcal{U}\subseteq\R^n$, where ${\vec P}\mathcal{U}=\{{\vec P}{\vec u} \ | \ {\vec u}\in\mathcal{U}\}$.
\item[e)] $\mathcal{U}\subseteq\mathcal{V}\Rightarrow \boldsymbol{\rho}(\mathcal{U})\preceq\boldsymbol{\rho}(\mathcal{V})$.
\item[f)] Let $\mathcal{U}_{\vec c},\mathcal{V}_{\vec c}\subseteq\R^n$ denote two uncertainty sets of ${\vec c}$. Then $\boldsymbol{\rho}(\mathcal{U}_{\vec c})\preceq\boldsymbol{\rho}(\mathcal{V}_{\vec c})$ $\Rightarrow$ $\exists~{\mathcal{U}}^\star,{\mathcal{V}}^\star\in\R^n$, containing ${\vec c}$ such that $\boldsymbol{\rho}(\mathcal{U}_{\vec c})=\boldsymbol{\rho}({\mathcal{U}}^\star)\preceq\boldsymbol{\rho}({\mathcal{V}}^\star)=\boldsymbol{\rho}(\mathcal{V}_{\vec c})$, $\mathcal{B}_{\mathcal{U}^\star}\subseteq\mathcal{B}_{\mathcal{V}^\star}$. \hfill $\blacktriangleleft$
\end{enumerate}

\end{prop}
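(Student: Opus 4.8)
The plan is to verify (a)--(e) directly from the three constituent metrics and to settle (f) by an explicit construction of representative sets. For (a), I would observe that $\texttt{d}(\cdot)\geq 0$ because it is twice an infimum of a supremum of norms, while $\mu(\mathcal{U})\geq 1$ for any nonempty $\mathcal{U}$ (Definition~\ref{def:Obsf_Unctnty} forces nonemptiness), so $\nu(\mathcal{U})=1-1/\mu(\mathcal{U})\in[0,1]$ with the convention $1/\infty=0$; finally the affine hull of a nonempty subset of $\R^n$ has dimension in $\{0,1,\ldots,n\}$, giving $\texttt{a}(\mathcal{U})\in[0,n]$, and the bound $(\infty,1,n)$ follows componentwise. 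For (b), if $\mathcal{U}=\{{\vec u}_0\}$ then the infimum in \eqref{eq:diameter} is attained at ${\vec u}_0$ so $\texttt{d}=0$, $\mu=1$ so $\nu=0$, and the affine hull is a point so $\texttt{a}=0$; conversely $\boldsymbol{\rho}(\mathcal{U})={\vec 0}$ forces $\nu(\mathcal{U})=0$, hence $\mu(\mathcal{U})=1$, which is exactly the singleton condition.

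Properties (c) and (d) are invariance under isometries. I would use that the translation ${\vec u}\mapsto{\vec v}+{\vec u}$ and the unitary map ${\vec u}\mapsto{\vec P}{\vec u}$ are bijections preserving the $\ell_2$ distance: the corresponding change of variable in the $\inf$--$\sup$ of \eqref{eq:diameter} leaves $\texttt{d}$ unchanged, bijectivity leaves $\mu$ (hence $\nu$) unchanged, and affine structure is preserved so $\texttt{a}$ is unchanged. For (e), assume $\mathcal{U}\subseteq\mathcal{V}$. For each fixed ${\vec v}$ one has $\sup_{{\vec u}\in\mathcal{U}}\|{\vec v}-{\vec u}\|\leq\sup_{{\vec u}\in\mathcal{V}}\|{\vec v}-{\vec u}\|$, and a pointwise inequality of functions is preserved under taking infima, so $\texttt{d}(\mathcal{U})\leq\texttt{d}(\mathcal{V})$ (the boundedness branch of \eqref{eq:diameter} being consistent since $\mathcal{V}$ bounded implies $\mathcal{U}$ bounded); moreover $|\mathcal{U}|\leq|\mathcal{V}|$ gives $\mu(\mathcal{U})\leq\mu(\mathcal{V})$ hence $\nu(\mathcal{U})\leq\nu(\mathcal{V})$, and the affine hull of $\mathcal{U}$ is contained in that of $\mathcal{V}$, giving $\texttt{a}(\mathcal{U})\leq\texttt{a}(\mathcal{V})$.

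For (f), write $r_U=\texttt{d}(\mathcal{U}_{\vec c})/2\leq r_V=\texttt{d}(\mathcal{V}_{\vec c})/2$. Since every component of $\boldsymbol{\rho}$ is invariant under the rigid motions of (c)--(d), the idea is to discard the given sets and build fresh representatives $\mathcal{U}^\star,\mathcal{V}^\star$ with the same indices, both containing ${\vec c}$, arranged so that their enclosing balls are internally tangent at ${\vec c}$. Concretely, fix a unit vector ${\vec e}$ and set the Chebyshev centres ${\vec o}_U={\vec c}-r_U{\vec e}$ and ${\vec o}_V={\vec c}-r_V{\vec e}$, so that ${\vec c}$ lies on $\partial\mathcal{B}(r_U,{\vec o}_U)$ and on $\partial\mathcal{B}(r_V,{\vec o}_V)$. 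A triangle-inequality check gives $\mathcal{B}(r_U,{\vec o}_U)\subseteq\mathcal{B}(r_V,{\vec o}_V)$ because the centre separation is exactly $r_V-r_U$, which is the internally tangent configuration. It then remains to prove the realizability sublemma: for a realizable index $(d,\nu,\texttt{a})$, a prescribed centre ${\vec o}$ and radius $r=d/2$, and a boundary point ${\vec c}$ with $\|{\vec c}-{\vec o}\|=r$, there is a set with precisely that index, Chebyshev centre ${\vec o}$, radius $r$, and containing ${\vec c}$. I would take a regular $\texttt{a}$-simplex inscribed in the sphere $\partial\mathcal{B}(r,{\vec o})$ with one vertex placed at ${\vec c}$ (whose smallest enclosing ball is its circumscribed ball, fixing the Chebyshev centre to ${\vec o}$ and the radius to $r$, and whose affine hull has dimension $\texttt{a}$), and then pad with interior points of that affine hull to attain the prescribed cardinality $\mu=1/(1-\nu)$; for $\mu=\infty$ I would adjoin a small $\texttt{a}$-ball of such points, and for $d=\infty$ take a full $\texttt{a}$-dimensional affine flat through ${\vec c}$.

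The step I expect to be the main obstacle is this realizability sublemma in the minimal-cardinality regime $\mu=\texttt{a}+1$. There one cannot simultaneously make ${\vec c}$ an interior point of the set, put the Chebyshev centre at ${\vec c}$, and reach affine dimension $\texttt{a}$ with only $\texttt{a}+1$ points, since forcing the centre onto an interior set-point requires the boundary-active points to surround it and collapses the dimension. The resolution is precisely the internal-tangency device above: by placing ${\vec c}$ on the common boundary rather than at a centre, ${\vec c}$ can serve as one simplex vertex, which is consistent with $\|{\vec c}-{\vec o}\|=r$ and leaves the remaining vertices free to realize the full affine dimension. I would close by separating the degenerate cases --- $d_U=0$ (then $\mathcal{U}_{\vec c}$ is a singleton by (b), so take $\mathcal{U}^\star=\{{\vec c}\}$, whose enclosing ball $\{{\vec c}\}$ lies in any $\mathcal{B}_{\mathcal{V}^\star}$ containing ${\vec c}$) --- and by noting that the natural compatibility constraints a genuine index must obey (e.g.\ $\texttt{a}\leq\mu-1$ and $d=0\Leftrightarrow\mu=1$) are automatic, because $(d_U,\nu_U,\texttt{a}_U)$ and $(d_V,\nu_V,\texttt{a}_V)$ arise from the actual sets $\mathcal{U}_{\vec c}$ and $\mathcal{V}_{\vec c}$.
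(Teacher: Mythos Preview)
Your treatment of (a)--(e) is essentially the paper's argument, only spelled out in more detail; the paper dispatches (a)--(d) in one line each and proves (e) componentwise exactly as you do.

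For (f), you and the paper share the same geometric core idea --- arrange the two enclosing balls to be internally tangent at ${\vec c}$ --- but you obtain the representative sets $\mathcal{U}^\star,\mathcal{V}^\star$ by a genuinely different mechanism. The paper never constructs fresh sets: it takes the \emph{original} $\mathcal{U}_{\vec c},\mathcal{V}_{\vec c}$ and applies rigid motions (a translation followed by a rotation), invoking (c) and (d) to preserve $\boldsymbol{\rho}$; the translation sends a boundary limit point of each enclosing ball to ${\vec c}$, and the rotation aligns the two centre-to-boundary directions so the smaller ball sits inside the larger. You instead discard the original sets and realise each index from scratch via inscribed regular simplices with optional padding. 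Your route is more constructive and sidesteps a delicate point in the paper's argument (the boundary limit point need not belong to the set if the set is not closed, so the paper's claim that the translated set contains ${\vec c}$ is not entirely clean). On the other hand, the paper's route is shorter: it avoids your realizability sublemma entirely, since moving an existing set automatically produces a set with the same index, and the compatibility constraints you worry about (e.g.\ $\mu\geq\texttt{a}+1$) never need to be checked.
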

\begin{proof}

\textit{a,b)} The proof of these parts follow directly from the definition of the metrics $\texttt{d}$, $\mu$, and $\texttt{a}$.

\textit{c,d)} Theses parts follows from that translation and rotation do not change sets in $\R^n$.

\textit{e)} Let us first show that $\texttt{d}(\mathcal{U})\leq \texttt{d}(\mathcal{V})$. If at least one set of $\mathcal{U}$ and $\mathcal{V}$ is unbounded, then $\texttt{d}(\mathcal{U})\leq \texttt{d}(\mathcal{V})$ holds trivially because for any unbounded $\mathcal{W}$, $\texttt{d}(\mathcal{W})=\infty$. If $\mathcal{U}$ and $\mathcal{V}$ are bounded, for any ${\vec x}\in\R^n$, $g_{\mathcal{U}}({\vec x})=\sup_{{\vec u}\in\mathcal{U}}||{\vec x}-{\vec u}||$ is bounded, so is $g_{\mathcal{V}}({\vec x})=\sup_{{\vec v}\in\mathcal{V}}||{\vec x}-{\vec v}||$. Moreover, if $\mathcal{U}\subseteq\mathcal{V}$, then $g_{\mathcal{U}}({\vec x})\leq g_{\mathcal{V}}({\vec x})$ for all ${\vec x}\in\R^n$. Therefore, $\inf_{{\vec x}}g_{\mathcal{U}}({\vec x})=\texttt{d}(\mathcal{U})\leq \inf_{{\vec x}}g_{\mathcal{V}}({\vec x})=\texttt{d}(\mathcal{V})$ as required. The relation $\nu(\mathcal{U})\leq \nu(\mathcal{V})$ follows from that $\mu(\mathcal{U})\leq \mu(\mathcal{V})$. In particular, $(\R^n,\mathcal{P}(\R^n),\mu)$, $\mu$ being the counting measure on $\mathcal{P}(\R^n)$, is a measure space, \cf~\cite[p.~146]{McDonald-Weiss-2013}. Therefore, from \cite[Theorem~5.1]{McDonald-Weiss-2013}, $\mathcal{U}\subseteq\mathcal{V}\Rightarrow \mu(\mathcal{U})\leq\mu(\mathcal{V})$, i.e., $\nu(\mathcal{U})\leq\nu(\mathcal{V})$. Finally, $\texttt{a}(\mathcal{U})\leq\texttt{a}(\mathcal{V})$ follows by definition.

\textit{f)} Let $\bar{\mathcal{U}}=\{{\vec P}_{\mathcal{U}} ({\vec u}+{\vec t}_{\mathcal{U}}) \ | \ {\vec u}\in\mathcal{U}_{\vec c}\}$ and $\bar{\mathcal{V}}=\{{\vec P}_{\mathcal{V}}({\vec v}+{\vec t}_{\mathcal{{\vec v}}}) \ | \ {\vec v}\in\mathcal{V}_{\vec c}\}$ for some ${\vec t}_{\mathcal{U}},{\vec t}_{\mathcal{V}}\in\R^n$ and unitary matrices ${\vec P}_{\mathcal{U}},{\vec P}_{\mathcal{V}}\in\R^{n\times n}$. Then from the hypothesis and from part c), d), it follows that $\boldsymbol{\rho}(\mathcal{U}_{\vec c})=\boldsymbol{\rho}(\bar{\mathcal{U}})\preceq\boldsymbol{\rho}(\bar{\mathcal{V}})=\boldsymbol{\rho}(\mathcal{V}_{\vec c})$. Let us now find a particular choice ${\vec t}^\star_{\mathcal{U}},{\vec t}^\star_{\mathcal{V}}, {\vec P}^\star_{\mathcal{U}}$, and ${\vec P}^\star_{\mathcal{V}}$ to compute $\mathcal{U}^\star$ and $\mathcal{V}^\star$. If both $\mathcal{U}_{\vec c},\mathcal{V}_{\vec c}$ are unbounded, then choose ${\vec t}^\star_{\mathcal{U}}=0,{\vec t}^\star_{\mathcal{V}}=0, {\vec P}^\star_{\mathcal{U}}={\vec P}^\star_{\mathcal{V}}={\vec I}$. The result follows immediately because $\mathcal{B}_{\mathcal{U}^\star}= \mathcal{B}_{\mathcal{V}^\star}=\R^n \cup \boldsymbol{\infty}$. If only $\mathcal{V}_{\vec c}$ is unbounded, the same choice of ${\vec t}^\star_{\mathcal{U}},{\vec t}^\star_{\mathcal{V}}, {\vec P}^\star_{\mathcal{U}}$, and ${\vec P}^\star_{\mathcal{V}}$ yields the result, because $\mathcal{B}_{\mathcal{V}^\star}=\R^n \cup \boldsymbol{\infty}$. However, if both $\mathcal{U}_{\vec c},\mathcal{V}_{\vec c}$ are bounded, we have to be more specific. Recall that $\mathcal{B}_{\mathcal{U}_{\vec c}}$ is the smallest volume (bounded) ball containing $\mathcal{U}_{\vec c}$, \cf~Definition~\ref{def:adversary_attackmodel}. Therefore, $\exists$ a limit point ${\vec l}_{\mathcal{U}}$ at the boundary of $\mathcal{B}_{\mathcal{U}_{\vec c}}$ such that $\forall~\boldsymbol{\epsilon}\succ{\vec 0}$, $\mathcal{B}(\boldsymbol{\epsilon},{\vec l}_{\mathcal{U}})\cap\mathcal{U}_{\vec c}$ is nonempty, \cf~Figure~\ref{Fig:Prop-f}. Similarly, $\exists$ a limit point ${\vec l}_{\mathcal{V}}$ at the boundary of $\mathcal{B}_{\mathcal{V}_{\vec c}}$ such that $\forall~\boldsymbol{\epsilon}\succ{\vec 0}$, $\mathcal{B}(\boldsymbol{\epsilon},{\vec l}_{\mathcal{V}})\cap\mathcal{V}_{\vec c}$ is nonempty. Then by construction, $\mathcal{U}^\star =\{{\vec u} + ({\vec c}-{\vec l}_{\mathcal{U}}) \ | \ {\vec u}\in\mathcal{U}_{\vec c} \}$ and $\mathcal{V}^\star =\{{\vec P}{\vec v} + ({\vec c}-{\vec P}{\vec l}_{\mathcal{V}}) \ | \ {\vec v}\in\mathcal{V}_{\vec c} \}$ satisfy the required necessary conditions, where ${\vec P}$ is a unitary matrix such that $({\vec a}_{\mathcal{U}}-{\vec l}_{\mathcal{U}})/||{\vec a}_{\mathcal{U}}-{\vec l}_{\mathcal{U}}||_2={\vec P}({\vec a}_{\mathcal{V}}-{\vec l}_{\mathcal{V}})/||{\vec a}_{\mathcal{V}}-{\vec l}_{\mathcal{V}}||_2$, ${\vec a}_{\mathcal{U}}=\texttt{c}(\mathcal{U}_c)$, and ${\vec a}_{\mathcal{V}}=\texttt{c}(\mathcal{V}_c)$, \cf~Figure~\ref{Fig:Prop-f}. To see this, note that the sets $\mathcal{U}^\star,\mathcal{V}^\star$ are based on rotations and translations of the original sets $\mathcal{U}_{\vec c},\mathcal{V}_{\vec c}$. In particular, ${\vec t}^\star_{\mathcal{U}} = {\vec c}-{\vec l}_{\mathcal{U}}$, ${\vec t}^\star_{\mathcal{V}}={\vec c}-{\vec P}{\vec l}_{\mathcal{V}}$, ${\vec P}^\star_{\mathcal{U}} = {\vec I}$, ${\vec P}^\star_{\mathcal{V}} = {\vec P}$. When ${\vec u}={\vec l}_{\mathcal{U}}$, we conclude that ${\vec c}\in\mathcal{U}^\star$ and, so is ${\vec c}\in\mathcal{V}^\star$. Moreover, the rotation induced by ${\vec P}$ ensures $\mathcal{B}_{\mathcal{U}^\star}\subseteq\mathcal{B}_{\mathcal{V}^\star}$, \cf~Figure~\ref{Fig:Prop-f}.

\noindent This completes the proof of the Proposition.
\end{proof}
}

\addnew
{
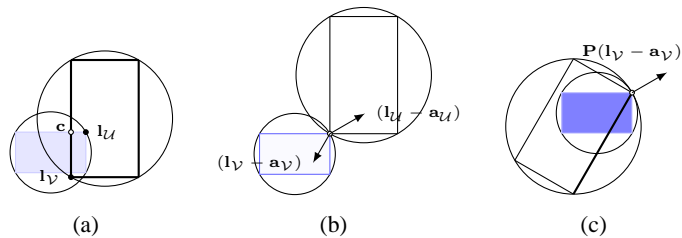
\begin{figure}[t]
\centering
\subfigure[]{
\begin{tikzpicture}[scale=.9]
\draw (-0.8,-0.5) circle (0.6cm);
\fill [color=blue!10,draw=blue!20] (-1.3196, -0.8000) rectangle ( -0.2804, -0.2000);
\draw [thick] (-0.5, 0.8660) -- (-0.5, -0.8660);
\draw [thick] (-0.5, -0.8660) -- (0.5, -0.8660);
\draw [thick] (0.5, -0.8660) -- (0.5, 0.8660);
\draw [thick] (0.5, 0.8660) -- (-0.5, 0.8660);

\draw (0,0) circle (1cm);
\node (a) at (-0.65,-0.13) {\tiny{${\vec c}$}};
\fill [color=white,draw=black] (-0.5,-0.2) circle (1.0pt);
\fill [draw=black] (-0.5, -0.8660) circle (1.0pt) node [left]{\tiny{${\vec l}_{\mathcal{V}}$}};
\fill [draw=black] (-0.2804, -0.2000) circle (1.0pt) node [right]{\tiny{${\vec l}_{\mathcal{U}}$}};
\end{tikzpicture}
\label{fig:MinmaxVsSum-b}}
\subfigure[]{
\begin{tikzpicture}[scale=.9]
\draw (-1.0196,-0.5) circle (0.6cm);
\fill [color=blue!2,draw=blue!60] ( -1.5392, -0.8000) rectangle ( -0.5, -0.2000);
\draw [thin] (-0.5, 1.532) -- (-0.5, -0.20);
\draw [thin] (-0.5, -0.20) -- (0.5, -0.20);
\draw [thin] (0.5, -0.20) -- (0.5, 1.532);
\draw [thin] (0.5, 1.532) -- (-0.5, 1.532);

\draw (0,0.666) circle (1cm);
\fill [color=white,draw=black] (-0.5,-0.2) circle (1.0pt);
\draw [-latex] (-0.5, -0.2) -- (0.0196, 0.1000) node [right] {\tiny{$({\vec l}_{\mathcal{U}}-{\vec a}_{\mathcal{U}})$}};
\draw [-latex] (-0.5, -0.2) -- (-0.75, -0.633) node [left] {\tiny{$({\vec l}_{\mathcal{V}}-{\vec a}_{\mathcal{V}})$}};
\end{tikzpicture}
\label{fig:MinmaxVsSum-a}}
\subfigure[]{
\begin{tikzpicture}[scale=.9]
\draw (-1.0196,-0.5) circle (0.6cm);
\fill [color=blue!50,draw=blue!20] ( -1.5392, -0.8000) rectangle ( -0.5, -0.2000);
\draw [thick] ( -1.3660,  -1.7) -- (-0.5, -0.20);
\draw [thin] (-0.5, -0.20) -- (-1.3660, 0.3);
\draw [thin] (-1.3660, 0.3) -- (-2.2320, -1.2);
\draw [thin] (-2.2320, -1.2) -- (-1.3660,  -1.7);

\draw (-1.3660, -0.7) circle (1cm);
\fill [color=white,draw=black] (-0.5,-0.2) circle (1.0pt);
\node (a) at (-.5, 0.4) {\tiny{${\vec P}({\vec l}_{\mathcal{V}}-{\vec a}_{\mathcal{V}})$}};
\draw [-latex] (-0.5, -0.2) -- (0.0196, 0.1000) ;
\end{tikzpicture}
\label{fig:MinmaxVsSum-a}}
\captionof{figure}{\label{Fig:Prop-f}\small{\addnew{Proposition~\ref{prop:privacy_index}-(f): (a)~Original sets $\mathcal{U}_{\vec c}$ (shaded box) and $\mathcal{V}_{\vec c}$ (solid rectangular wire frame); (b) A representation of intermediate steps, where the $\mathcal{U}_{\vec c}$ and $\mathcal{V}_{\vec c}$ are shifted, yet they still contains ${\vec c}$; (c)~Sets $\mathcal{U}^\star$ (shaded box) and $\mathcal{V}^\star$ (solid rectangular wire frame).}}}
\vspace{-8mm}
\end{figure}
}

%

%

\addnew{As we will see next, Proposition~\ref{prop:privacy_index} is useful for explaining interesting implications of the privacy index. Especially, the parts a), b) e), and f) of Proposition~\ref{prop:privacy_index} are more relevant.}

\addnew
{
\subsection{Implications of the Privacy Index}\label{subsec:privacyIndex}
}

\addnew
{
By definition, the privacy index $\boldsymbol\rho$ lies in the extended $\R^3$ and is nonnegative, \cf~Proposition~\ref{prop:privacy_index}-a). Moreover, Proposition~\ref{prop:privacy_index}-a) provides the \emph{smallest} and the \emph{largest} possible values of $\boldsymbol\rho$ for an input-tuple in $\R^n$. Note that if an uncertainty set $\mathcal{U}_{\vec c}$ is a singleton, then it contains ${\vec c}$ only, \cf~Definition~\ref{def:Obsf_Unctnty}. Therefore, from Proposition~\ref{prop:privacy_index}-b), $\boldsymbol \rho(\mathcal{U}_{\vec c})={\vec 0}\Leftrightarrow\mathcal{U}_{\vec c}=\{{\vec c}\}$. Thus, $\boldsymbol \rho={\vec 0}$ signifies zero privacy, where an adversary can exactly determine the associated input tuple.

Proposition~\ref{prop:privacy_index}-e) and f) provide a mechanism to compare different $\boldsymbol\rho$ vectors, wherefore the privacy of different solution methods can be compared. Proposition~\ref{prop:privacy_index}-e) and Remark~\ref{remark:knowledge} ensure that the optimization based approaches lacks robustness to adversarial attacks, such as those involving side information. In addition, Proposition~\ref{prop:privacy_index}-e) is useful to compute lower bounds for $\boldsymbol\rho$ of a complicated uncertainty set. The converse of Proposition~\ref{prop:privacy_index}-e) does not holds in general.

Proposition~\ref{prop:privacy_index}-f) is a weaker form of e): given a nonincreasing (with respect to $\R^3_+$) sequence $\big\{\boldsymbol\rho\big(\mathcal{U}^{(n)}_{{\vec c}}\big)\big\}_n^N$, $\exists$ a sequence $\big\{\bar{\mathcal{U}}^{(n)}_{{\vec c}}\big\}_n^N$ of sets, each containing ${\vec c}$ and $\mathcal{B}_{\bar{\mathcal{U}}^{(1)}_{{\vec c}}}\supseteq\cdots \supseteq\mathcal{B}_{\bar{\mathcal{U}}^{(N)}_{{\vec c}}}\supseteq\{{\vec c}\} $. Thus, Proposition~\ref{prop:privacy_index}-f) gives a way to compare the privacy of two algorithms given the privacy indexes, irrespective of the underlying solution machinery. In addition, Proposition~\ref{prop:privacy_index}-f) suggests a natural ordering of privacy indexes in general, (Figure~\ref{fig:PrivacyOrdering}) providing meanings to the \emph{smallest} (i.e., $\vec 0$) and the \emph{largest} (i.e., $(\infty,1,n)$) privacy for an \emph{input tuple} ${\vec c}\in\R^n$, \cf~Proposition~\ref{prop:privacy_index}-a).
}

\vspace{3mm}
\section{{Optimization Approaches and their Privacy}}\label{sec:Opt_Approaches_Privacy}

\addnew
{
In this section, standard optimization approaches are concisely presented, \cf~\texttt{Q2}~\S~\ref{subsec:problem_statement}. Examples are given to quantify the privacy of these optimization approaches, \cf~\texttt{Q1}. All the examples are designed specifically to illustrate our point.}

\vspace{-1mm}
\subsection{Transformation of Objective and Constraint Functions}\label{subsec:Obj-Cons_transformation}

\addnew
{
Suppose $\psi_0:\D_0\subseteq\R\rightarrow\R$ is monotonically increasing and $\D_0\supseteq\texttt{range}(f_0)$. Moreover, suppose $\psi_i:\D_i\subseteq\R\rightarrow\R$, with $\D_i\supseteq\texttt{range}(g_i)$, is such that $\psi_i(z)\leq 0\Leftrightarrow z\leq 0$ and $\varphi_i:\R\rightarrow\R$ is such that $\varphi({\vec z})= {\vec 0}\Leftrightarrow{\vec z}={\vec 0}$. Let \texttt{P} denote the problem formed by using  ${\psi}_0(f_0({\vec x}))$, ${\psi}_i(g_i({\vec x}))$, and ${\varphi}_i(h_i({\vec x}))$, instead of $f_0({\vec x})$, $g_i({\vec x})$, and $h_i({\vec x})$, respectively. Here $g_i$ and $h_i$ are $i$th component of $g$ and $h$, respectively. Then ${\vec x}^\star$ solves \texttt{P} if and only if it solves problem~\eqref{eq:optimization_prob}. With this method, potential obfuscation of problem data (\cf~Definition~\ref{def:Obsf_Unctnty}) is directly linked to the function compositions, ${\psi}_0(f_0(\cdot))$, ${\psi}_i(g_i(\cdot))$, and ${\varphi}_i(h_i(\cdot))$.
}

\addnew
{

\begin{figure}[t]
\centering
\begin{tikzpicture}[scale=0.45]
\draw[->] (0, 0.0272) -- coordinate (x axis mid) (7.5, 0.0272) node[right] {\tiny{$\texttt{a}(\mathcal{U}_c)$}};
\draw[->] (0, 0.0272) -- coordinate (y axis mid)(0,4.5) node[above left] {\tiny{$\nu(\mathcal{U}_c)$}};
\draw (0, 3.1564) -- (8.1, 3.1564) node[below] {\tiny{$\nu(\mathcal{U}_c)=0.8750$}};
\draw (2, 0.0272 ) -- (2, 4.5) node[above] {\tiny{$\texttt{a}(\mathcal{U}_c)=2$}};
\node at (6.9,0.75) {\scriptsize{$\mathcal{U}_c= \{c\in\R^{7} \left| \cdots\cdots \right.\} $}};

\draw[color=red] plot[only marks,mark=ball, ball color=red] file {abcd1.table};
\draw[color=red] plot[only marks,mark=ball, ball color=blue] file {abcd2.table};
\draw[color=red] plot[only marks,mark=ball, ball color=red] file {abcd3.table};
\draw[color=red] plot[only marks,mark=ball, ball color=blue] file {abcd4.table};
\draw[color=red] plot[only marks,mark=ball, ball color=red] file {abcd5.table};
\draw[color=red] plot[only marks,mark=ball, ball color=blue] file {abcd6.table};
\draw[color=red] plot[only marks,mark=ball, ball color=red] file {abcd7.table};
\draw[color=red] plot[only marks,mark=ball, ball color=blue] file {abcd8.table};
\draw[color=red] plot[only marks,mark=ball, ball color=red] file {abcd9.table};
\draw[color=red] plot[only marks,mark=ball, ball color=blue] file {abcd10.table};

 \foreach \x in {0,1,2,3,4,5,6,7}
        \draw [xshift=-0](\x,0.0272) -- (\x ,-0.05)
            node[anchor=north] {\tiny{$\x$}};
\foreach \y/\ytext in {   0.0272/0,  0.1515/0.5, 0.8838/0.75, 4.2396/0.9}
        \draw (1pt,\y ) -- (-3pt,\y cm) node[anchor=east] {\tiny{$\ytext$}};
\draw [fill=blue,opacity=0.07] (2,3.1564) rectangle (7,4.5);
\draw [fill=red,opacity=0.07] (0,0.0272) rectangle (2,3.1564);
\draw [-to,shorten >=2pt, thick](7, 4.7) -- (6, 4.2396) ;
\draw [-to,shorten >=2pt, thick](6, 4.2396) -- (3, 3.1564) ;
\draw [-to,shorten >=2pt, thick] (3, 3.1564)-- (1, 1.4160) ;
\node at (16, 4.2396) {\footnotesize{arrows $\Rightarrow$ improvements in $\mathcal{K}$}};
\end{tikzpicture}
\captionof{figure}{\label{fig:PrivacyOrdering}\small{\addnew{Feasible $\boldsymbol\rho(\mathcal{U}_c)$ for a fixed $\texttt{d}(\mathcal{U}_c)$. The privacy corresponds to the points in the upper-right shaded area are better than that of the points in lower-left shaded area. Rest of the points are incomparable.}}}
\vspace{-6mm}
\end{figure}
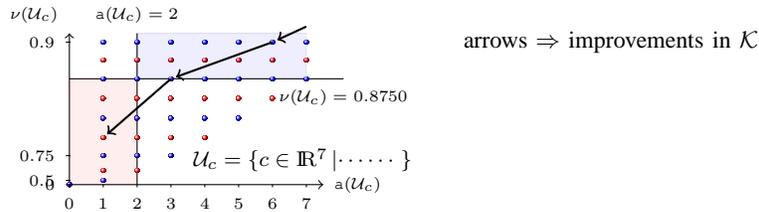
}

\addnew{Scaling used in~\cite{Dreier-Kerschbaum-2011,Bednarz-Bean-Roughan-2009,Wang-Ren-Wang-11,Bednarz-2012} is a particular case of the objective and constrained transformation. In~\cite{Mangasarian-OptLet-2012}, authors handle their affine constraint functions, denoted ${\vec A}{\vec x}{-}{\vec b}{=}{\vec 0}$, by using $\varphi(\vec z){=}{\vec B}{\vec z}$, where ${\vec B}$ is full column rank with the desired properties ${\vec A}{\vec x}{-}{\vec b}{=}{\vec 0}{\Leftrightarrow}{\vec B}({\vec A}{\vec x}{-}{\vec b}){=}{\vec 0}$. Now, we present a new, yet an important example to convey the idea.}
\begin{Exx}[\addnew{Localization from range measurements}]\label{ex:L_norm_LS_problem}
\addnew
{
Suppose the area $\mathcal{A}\in\R^2$ with $\texttt{d}(\mathcal{A})=D$ contains $N$ military deployed beacons, denoted $i=1,\ldots,N$, that serve civilian boats to locate their positions. Let $\bar{\vec x}\in\R^2$ denote the position of boat \texttt{B}. The range measurements, denoted $y_i\in\R$ and ${\vec a}_i\in\R^2$, $i=1,\ldots,N$, from the beacons are received by \texttt{B} to compute its location ${\vec x}$. In particular, $y_i$ and ${\vec a}_i$ are the distance and the unit vector, respectively, from \texttt{B} to beacon $i$. Assuming a linear measurement model with unknown noise, an estimate $\hat{\vec x}$ of ${\vec x}$ is given by the solution of the problem, $\mbox{minimize}  ||{\vec A}{\vec x}-{\vec y}||_2$,
where the variable is ${\vec x}\in\R^2$, and the problem data are ${\vec A}=-[{\vec a}_1,\ldots,{\vec a}_N]\tran$ and ${\vec y}=(y_1,\ldots,y_N)$. However, due to security reasons, military beacons do not want to send their range measurements, which are simply the polar coordinates of their positions, with respect to \texttt{B}.
%
%
Now the idea is to simply use the objective transformation $\psi_0({z})= z^2$, and let \texttt{B} solve
\be\label{eq:obj_squaring}\nonumber
\mbox{minimize}_{{\vec x}} \ {\vec x}\tran(\textstyle\sum_{i=1}^N{\vec a}_i{\vec a}_i\tran){\vec x}+ 2(\textstyle\sum_{i=1}^N{y}_i{\vec a}_i\tran){\vec x}\triangleq{\vec x}\tran\bar{\vec A}{\vec x}+2 {\bar{\vec y}}\tran{\vec x} \ .
\ee
The beacons communicate with each other to compute $\bar{\vec A}$ and $\bar{\vec y}$, which is transmitted to \texttt{B}, i.e., the message $S{=}(\bar{\vec A},\bar{\vec y})$. Let us now explore the privacy of this method on the input tuple ${\vec c}{=}y_1{\vec a}_1$, the location of beacon~$1$. Without loss of generality, let $\bar{\vec x}{=}(0,0)$. Suppose \texttt{B} knows by definition $||{\vec a}_1||{=}1$, $\bar{\vec A} {=}{\vec a}_1{\vec a}_1\tran{+}\sum_{i=2}^{N}({\vec a}_i{\vec a}_i\tran)$, 3) $\bar{\vec y}{=}y_1{\vec a}_1 {+}\sum_{i=2}^{N}y_i{\vec a}_i$, $y_i{>}0$, and, of course, $N{=}N_0$, i.e., the auxiliary knowledge.
\addnew
{\vspace{-0mm}
\begin{figure}[h!]
\centering
\subfigure[]{
\begin{tikzpicture}[scale=0.35]
\fill [color=blue!10,draw=black,rotate=45] (0,0) ellipse (1.7cm and 1.2cm);
\fill [color=blue!10,rotate=135,opacity=0.0] (0,0) ellipse (4cm and 0.01cm);
\draw (0,0) circle (1cm);
\draw [thin] (0, 2) -- (0, -2);
\draw [thin] (-3, 0) -- (3, 0);
\fill [draw=black] (-1.4142, 1.4142) circle (2.0pt) node [above]{\tiny{$\bar{\vec y}$}};
\fill [draw=black] (0,0) circle (2.0pt) node [above right ]{\tiny{$\bar{\vec x}$}};
\end{tikzpicture}
\label{fig:MinmaxVsSum-b}}
\subfigure[]{
\begin{tikzpicture}[scale=0.35]
\fill [color=blue!10,draw=black,rotate=45] (0,0) ellipse (2.1cm and 0.55cm);
\fill [color=blue!10,rotate=135,opacity=0.0] (0,0) ellipse (4cm and 0.01cm);
\draw (0,0) circle (1cm);
\draw [thin] (0, 2) -- (0, -2);
\draw [thin] (-3, 0) -- (3, 0);
\fill [draw=black] (-1.4142, 1.4142) circle (2.0pt) node [above]{\tiny{$\bar{\vec y}$}};
\draw [draw=black] ( 0.2419   , 0.9703) circle (2.2pt) ; 
\draw [draw=black] ( -0.2419   , -0.9703) circle (2.2pt) ;
\draw [draw=black] ( -0.9703,-0.2419) circle (2.2pt) ;
\draw [draw=black] ( 0.9703,0.2419) circle (2.2pt) ; 
\draw (0,0) edge [-latex, thin] (  0.4697,  1.8839) ;
\draw (0,0) edge [-latex, thin] ( -1.8839 ,-0.4697  ) ;
\draw (-1.4142,1.4142) edge [-, thin, opacity=0.2] (  0.4697,  1.8839) ;
\draw (-1.4142,1.4142) edge [-, thin, opacity=0.2] (   -1.8839 ,-0.4697 ) ;
\node (C) at (0.4697,  2) {\scriptsize{${\vec c}_1$}};
\node (C) at (-2 ,-0.7) {\scriptsize{${\vec c}_2$}};
\end{tikzpicture}
\label{fig:MinmaxVsSum-a}}
\subfigure[]{
\begin{tikzpicture}[scale=0.35]
\draw [thick] (-2.5, 2.5) -- (2.5, -2.5);
\draw (0,0) circle (1cm);
\draw [thin] (0, 2) -- (0, -2);
\draw [thin] (-3, 0) -- (3, 0);
\fill [draw=black] (-1.4142, 1.4142) circle (2.0pt) node [above]{\tiny{$\bar{\vec y}$}};
\fill [draw=black] (0,0) circle (2.0pt) node [above right ]{\tiny{$\bar{\vec x}$}};
\node (C) at (-2.9, 2.5) {\scriptsize{${A}_1$}};
\node (C) at (2.9, -2.5) {\scriptsize{${A}_2$}};
\end{tikzpicture}
\label{fig:MinmaxVsSum-a}}
\vspace{-5mm}
\end{figure}
}
Thus, $\mathcal{U}_{{\vec c}}=\{{\vec c}=y_1{\vec a}_1 \ | \ ||{\vec a}_1||=1, {\vec a}_i{\in}{\mathcal{E}}(\bar{\vec A}), y_1{\vec a}_1{=} (\bar{\vec y}{-}\sum_{i=2}^{N_0}y_i{\vec a}_i), ||{\vec a}_i||{=}1 \ \forall i{=}2,\ldots,N_0, y_i{\leq} D \ \forall i{=}1,\ldots,N_0\}$, where $\mathcal{E}(\bar{\vec A})=\{{\vec a} \ | \  {\vec a}\tran{\bar{\vec A}}^{-1}{\vec a}{\leq} 1\}$.  Note that ${\vec a}_i{\in}{\mathcal{E}}(\bar{\vec A}) \ \forall i$ is a necessary condition, by definition of $\bar{\vec A}$. Let us next compute~$\boldsymbol \rho(\mathcal{U}_{\vec c})$.


We first build some intuition of $\mathcal{U}_{{\vec c}}$. The figure above illustrates all possible cases, the circle and the shaded area represent the unit circle $\mathcal{O}$ and the ellipse $\mathcal{E}$, respectively. So, ${\vec a}_1\in\mathcal{O}\cap\mathcal{E}$. Note that $y_i$ is simply the magnitude of some vector in $\R^2$, which is spanned by vectors in $\mathcal{O}\cap \mathcal{E}$. Now consider scenario 1, where $N_0>2$. The scenario~1-cases (a),(b) yield ${\boldsymbol \rho}_{\textrm{1,a,b}}(\mathcal{U}_{{\vec c}})=(D,1,2)$, where the subscript of ${\boldsymbol \rho}$ denotes the scenario and the cases. The scenario 1-case~(c) yields ${\boldsymbol \rho}_{\textrm{1,c}}(\mathcal{U}_{{\vec c}})=(D,1,1)$. Now consider scenario 2, where $N_0=2$. Clearly, the scenario~2-case~(a) doesn't exist. If scenario~2-case~(b), $\mathcal{O}\cap \mathcal{E}$ is a set with $4$ points, i.e., ${\vec a}_1$ is on one of the $4$ points, \cf~the open circles in (b). Moreover, because \texttt{B} knows $\bar{\vec y}$, ${\vec c}=y_1{\vec a}_1$ is either at ${\vec c}_1$ or ${\vec c}_2$, i.e., $\mathcal{U}_{\vec c}=\{{\vec c}_1,{\vec c}_2\}$, \cf~figure (b). Thus, ${\boldsymbol \rho}_{\textrm{2,b}}(\mathcal{U}_{\vec c})=(||{\vec c}_1-{\vec c}_2||,0.5,1)$. If scenario~2-case~(c), we similarly get $\mathcal{U}_{\vec c}=A_1A_2$, the line segment of length $D$, \cf~figure (c), i.e., ${\boldsymbol \rho}_{\textrm{2,c}}(\mathcal{U}_{{\vec c}})=(D,1,1)$. The ${\boldsymbol \rho}$ when $N_0=1$ is similarly computed.

Based on our privacy index, ${\boldsymbol \rho}_{\textrm{1,a,b}}\geq {\boldsymbol \rho}_{\textrm{1,c}}={\boldsymbol \rho}_{\textrm{2,c}}\geq {\boldsymbol \rho}_{\textrm{2,b}}$. The first inequality clearly agrees with out intuition. The second equality agrees with our intuition in the sense that the ambiguity in $c$ is identical in the associated two settings. The last inequality can be intuitively explained as follows: scenario 1,2-case (c) corresponds to a setting where the beacon measurements are linearly dependent, and therefore the beacon~1 can be any point along the line $A_1A_2$, as opposed to scenario 2-case (b), where the beacon measurements are not linearly dependent.}
\end{Exx}

\subsection{Transformation via Change of Variables}\label{subsec:chg_of_variable}

\addnew
{
Suppose the decision variable ${\vec x}$ in problem~\eqref{eq:optimization_prob} is in $\R^n$. Let $\phi:\R^m\rightarrow \R^n$ be a function  such that $\texttt{range}(\phi)\supseteq \bigcap_{i} \texttt{dom}(g_i)$. Now consider the problem, denoted \texttt{Q}, achieved by the change of variables ${\vec x}= \phi ({\vec z})$. Then \texttt{Q} is equivalent to problem~\eqref{eq:optimization_prob} in the sense that if ${\vec z}^\star$ solves \texttt{Q}, then ${\vec x}^\star= \phi({\vec z}^\star)$ solves problem~(\ref{eq:optimization_prob}). Note again that any potential obfuscation of problem data (\cf~Definition~\ref{def:Obsf_Unctnty}) is directly linked to the function compositions, $f_0(\phi(\cdot))$, $g(\phi(\cdot))$, and $h(\phi(\cdot))$.
}

\addnew
{
All the approaches in~\cite{Mangasarian-OptLet-2011,Dreier-Kerschbaum-2011,Mangasarian-OptMethSW-2011,Bednarz-Bean-Roughan-2009,Wang-Ren-Wang-11,Bednarz-2012,Mangasarian-Wild-2008,Mangasarian-Wild-Fung-2008} can be obtained as a particular case of the \emph{change of variables}. In particular, those methods uses an affine $\phi$. Let us now give an example for a case where $\phi$ is not necessarily affine.}
\begin{Exx}[\addnew{Resource allocation}]\label{ex:nonlinear_change_variables}
\addnew
{
Suppose an ammunition supply depot (\texttt{ASD}) wants to transfer $N$-types of nuclear-ammunition to the war field. Let $\beta_i$ denote the radiation per unit volume of $i$th ammunition type. The overall radiation should be no greater than a specified value $\gamma$. The containers for each type~$i$ should be cubic of length $x_i$. The criterion for computing the volume of type~$i$ is determine by the penalty function $\alpha_i/x_i$, where $\alpha_i$ denote the priority levels of the $i$th type. Suppose the \texttt{ASD} relies on a third party to compute the desired container box lengths ${\vec x}^\star=(x^\star_i)_{i=1:N}$, by $\mbox{minimizing}~\sum_{i=1}^{N} (\alpha_i/x_i)$ $\mbox{subject to}~\sum_{i=1}^{n} \beta_i x^3_i \leq \gamma$ and ${\vec x} \geq {\vec 0}$. The problem data is $\alpha_i>0$, $\beta_i>0$, and $\gamma>0$. On the other hand, suppose the ASD does not want to reveal the input tuples $(\alpha_i,\beta_i)$, $i=1,\ldots,N$.

What if ASD uses the change of variable $x_i = \phi_i(z_i)= \alpha_i\exp(z_i)$ for all $i$. Then it is sufficient to ask the third party (say \texttt{B}) to $\mbox{minimize}~\sum_{i=1}^{N} \exp(-z_i)$ $\mbox{subject to}~\sum_{i=1}^{N} s_i \exp(3z_i) \leq 1$,
with variable ${\vec z}=(z_1,\ldots,z_N)$ and problem parameters $s_i=(\beta_i\alpha^3_i/\gamma)$.
This method requires the transmission of the message $S=(s_i)_{i=1:N}$ to \texttt{B}. Suppose \texttt{B} knows the dependence of $s_i$ on $\alpha_i,\beta_i,\gamma$ and the positivity of $\alpha_i,\beta_i,\gamma$. Therefore, $\mathcal{U}_{(\alpha_i,\beta_i)}{=}\{(\alpha_i,\beta_i) \ | \ s_i{=}\beta_i\alpha^3_i/\gamma , \beta_i,\alpha_i,\gamma>0  \}$, which corresponds to $\boldsymbol \rho(\mathcal{U}_{(\alpha_i,\beta_i)})=(\infty,1,2)$. This is the highest possible privacy for an input tuple in $\R^2$, \cf~Proposition~\ref{prop:privacy_index}-(a).}
\end{Exx}

One can readily apply hybrid variants of the \emph{transformation via change of variables} and \emph{transformation of objective and constraints},~\cf~\cite{Dreier-Kerschbaum-2011,Wang-Ren-Wang-11}.

\subsection{Decomposition Methods}\label{subsec:decomposition_based_method}
\addnew
{
In this section, we highlight important aspects of decomposition methods for preserving privacy of problem data. In particular, the dual decomposition method is considered. Arguments, in the case of primal decomposition methods and ADMM method are similar.}


\addnew
{
We start by noting that the \emph{decomposition} methods are applied whenever the problem structure inherently possesses some separability properties, where the original problem can be solved by coordinating smaller subproblems, one for each subsystem. Thus, instead of problem~\eqref{eq:optimization_prob}, the general problem we consider in this section is given by
\begin{equation} \label{Primal_problem_big}
\begin{array}{ll}
\mbox{minimize} & \sum_{i=1}^{N} f_i ({\vec x}_i, {\vec y}_i;{\vec c}_i) \\
\mbox{subject to} & ({\vec x}_i, {\vec y}_i) \in \mathcal{G}_i({\vec c}_i),  \qquad   i=1,...,N\\
& {\vec y}_i={\vec E}_i {\vec z}  , \qquad  \quad  \  i=1,...,N \ ,
\end{array}
\end{equation}
where the variables are $({\vec x}_i,{\vec y}_i)_{i=1:N}$ and ${\vec z}$, and $i$ denotes the subsystem index. Typically, ${\vec x}_i$, ${\vec y}_i$ and ${\vec z}$ are called local, interface, and global variables, respectively. The matrix ${\vec E}_i$ is a $0$-$1$ matrix that maps ${\vec z}$ into the interface variables ${\vec y}_i$. This mapping accounts for the coupling of the problem and ensures consensus between subsystem interface variables and the global variables. The set $\mathcal{G}_i({\vec c}_i)$ is the constraints at the subsystem~$i$ and is dependent on the private problem data~${\vec c}_i$. }
\addnew{
Let us now summarize the standard dual decomposition algorithm and see how it preserves privacy for private data~${\vec c}_i$.
\begin{Algg}\label{alg:dual_decomp_algorithm}
\noindent \emph{Dual Decomposition}

Given ${\boldsymbol\lambda}=(\boldsymbol{\lambda}_1,\ldots,\boldsymbol{\lambda}_N)$ such that ${\vec E}\tran\boldsymbol{\lambda} = {\vec 0}$. Set $k=1$.

\textbf{while} \mbox{(stopping criterion)}

\begin{enumerate}
\item[1.] Subsystem~$i$ $\mbox{minimizes}~f_i ({\vec x}_i,{\vec y}_i;{\vec c}_i) +\boldsymbol{\lambda}_i\tran {\vec y}_i$ $\mbox{subject to}~({\vec x}_i,{\vec y}_i) \in \mathcal{G}_i({\vec c}_i)$, with $({\vec x}_i,{\vec y}_i)$. Let $\big({\vec x}^{(k)}_i({\vec c}_i),{\vec y}^{(k)}_i({\vec c}_i)\big)$ denote the solution.
\item[2.] Subsystems coordinate to update global ${\vec z}$, i.e., ${\vec z}\asign({\vec E}\tran{\vec E})^{-1}{\vec E}\tran \big({\vec y}^{(k)}_1({\vec c}_1),\ldots,{\vec y}^{(k)}_N({\vec c}_N)\big)$.
\item[3.] Subsystem~$i$ updates ${\boldsymbol\lambda}_i$ as, ${\boldsymbol\lambda}_i \asign {\boldsymbol\lambda}_i+\alpha_k\big({{\vec y}^{(k)}_i({\vec c}_i)}-{\vec E}_i{\vec z}\big)$. Set $k\asign k+1$.
\end{enumerate}
\end{Algg}
%
%
%

In the algorithm above, $\alpha_k$ is an appropriate step size~\cite{Bertsekas-Tsitsiklis-97}. As far as the privacy of the problem data is concerned, the crucial step of the algorithm is step~2. This is because the coordination requires exchange of messages, which are dependent on the problem data. In particular, the $i$th subsystem requires revealing message, denoted $S^{\textrm{dual}}_i$, where $S^{\textrm{dual}}_i=\big({\vec y}^{(k)}_i({\vec c}_i)\big)_{k=1:T}$ and $T$ is the number of iteration before the algorithm termination. Note that ${\vec y}^{(k)}_i({\vec c}_i)$ is, in fact, the subgradients of dual master objective function~\cite{Bertsekas-Tsitsiklis-97}. Therefore, any potential obfuscation of problem data of subsystem~$i$ is directly linked to the computed subgradients until the algorithm termination, i.e., $S^{\textrm{dual}}_i$. An example is given below.
}
\addnew{
\begin{Exx}[Average consensus]\label{ex:QP_multi_party_constraints_coupled}
Consider the computation of the arithmetic average $\bar c$ of $N>2$ scalars, denoted $c_1,\ldots,c_N$, where each $c_i$ is private to subsystem~$i$. Note that $\bar c$ is simply the argument $y$ that minimizes $\sum_{i=1}^{N}(y-c_i)^2$. This minimization can be equivalently posed as
\begin{equation} \label{eq:AVG-Consensus-optimization_prob}
\begin{array}{ll}
\mbox{minimize} & \sum_{i=1}^{N}(y_i-c_i)^2 \\
\mbox{subject to} & y_i=z, \ i=1,\ldots,N \ ,
\end{array}
\end{equation}
with variables $(y_i)_{i=1:N}$ and $z$. Let us next show, without loss of generality, how Algorithm~\ref{alg:dual_decomp_algorithm} achieves privacy for input tuple $c_1$ of subsystem~$1$, given adversary is $N$.

Note that the step~1 of Algorithm~1 is
\vspace{-5mm}
\be\label{eq:consensus}
y^{(k)}_i(c_i)\asign (c_i-\lambda_i/2) \ .
\ee
The step~2 is simply the average $z\asign(1/N)\sum_{i=1}^N (c_i-\lambda_i/2)$ and step~3 is ${\lambda}_i\asign{\lambda}_i+\alpha_k\big({{ y}^{(k)}_i({ c}_i)}-{z}\big)$. Thus, the message exposed by subsystem~$1$ to subsystem~$N$ is $S_1=\big(y^{(k)}_1({c}_1)\big)_{k=1:T}$, where $T$ is the total iterations.
%

Without explicitly describing the relation $\mathcal{R}_{{c}_1}$ (or equivalently the uncertainty set $\mathcal{U}_{{c}_1}$), we use an alternative method, where the associated $\boldsymbol \rho (\mathcal{U}_{{ c}_1})$ is computed by construction. Now note that if for $i=1,\ldots,N-1$, ${c}_i$ and $\lambda_i$ satisfy \eqref{eq:consensus} for all $k=1,\ldots,T$, so is $\hat{{c}}_i(\theta_i)=(c_i-\theta_i,\lambda_i-2\theta_i)$, where $\boldsymbol \theta=(\theta_i)_{i=1:N-1}$ is chosen such that ${\vec 1}\tran \boldsymbol\theta = 0$. Indeed, $\theta_1$ here parameterizes the uncertainty set $\mathcal{U}_{c_1}$. In particular, $\mathcal{U}_{c_1}=\{c_1-\theta_1 \ | \ {\vec 1}\tran \boldsymbol\theta = 0 \}$, which yields $\boldsymbol \rho(\mathcal{U}_{c_1})=(\infty,1,1)$. This is the highest possible privacy for an input tuple in $\R$, \cf~Proposition~\ref{prop:privacy_index}-a).

Even though, we assume that subsystem $N$ is a passive adversary (\cf~Definition~\ref{def:adversary}), it is easily seen that, even if $r$ number of subsystems collude, $\boldsymbol \rho(\mathcal{U}_{c_1})=(\infty,1,1)$ remains intact, given $r<N-1$. In other words, as long as subsystem~$1$ has at least one reliable subsystem~$j$, $j\neq 1$, $\boldsymbol \rho(\mathcal{U}_{c_1})=(\infty,1,1)$.
\end{Exx}
}
\addnew
{
}

\section{Conclusions}\label{sec:conclusions}
\addnew{
In this paper, a mechanism to quantify the privacy of a broader class of optimization approaches was proposed. We defined a one-to-many relation, which relates a given adversarial observed message to an uncertainty set of the problem data. The privacy measure was then designed based on the uncertainty sets. The properties of the proposed privacy measure was analyzed. From the definition it is clear that optimization based approaches lacks robustness to adversarial attacks, such as those involving side information. Our privacy definition clearly model those intrinsic properties of optimization approaches. The key ideas were illustrated by examples. An important possible extensions is to explore ways of designing optimization approaches to guarantee a specified privacy level.

}

\bibliographystyle{IEEEbib}
\bibliography{jour_short,conf_short,references,references_PPO}
\end{document}